\newcommand{\eff}{{\ensuremath{\mathbb{F}}}} 
\newcommand{\A}{\ensuremath{\mathcal{A}}}
\newcommand{\zed}{{\ensuremath{\mathbb{Z}}}} 
\newcommand{\D}{\ensuremath{\mathcal{D}}}
\newtheorem{theorem}{Theorem}[section]
\newtheorem{lemma}[theorem]{Lemma}
\newtheorem{corollary}[theorem]{Corollary}
\newtheorem{remark}{Remark}[section]
\newtheorem{example}{Example}[section]
\newtheorem{definition}{Definition}[section]
\newtheorem{constructionx}{Construction}[section]
\let\oldconstruction\construction
\renewcommand{\construction}{\oldconstruction\normalfont}
\title{Unconditionally Secure  Non-malleable Secret Sharing and Circular External Difference Families}
\author[1]{Shannon Veitch}
\affil[1]{Department of Computer Science, ETH Zurich, Zurich, Switzerland}
\author[2,3]{Douglas R.\ Stinson\thanks{D.R.\ Stinson's research is supported by  NSERC discovery grant RGPIN-03882.}}
\affil[2]{School of Mathematics and Statistics\\
Carleton University\\
Ottawa, Ontario, K1S 5B6, Canada}
\affil[3]{David R.\ Cheriton School of Computer Science\\University of Waterloo\\ Waterloo ON, N2L 3G1\\Canada}
\date{\today}
\begin{document}

\maketitle

\begin{abstract}
Various notions of non-malleable secret sharing schemes have been considered. In this paper, we review the existing work on non-malleable secret sharing and suggest a novel game-based definition. We provide a new construction of an unconditionally secure non-malleable threshold scheme with respect to a specified relation. To do so, we introduce a new type of algebraic manipulation detection (AMD) code and construct examples of new variations of external difference families, which are of independent combinatorial interest.

\end{abstract}

\section{Introduction}


The concept of non-malleability in cryptography was introduced by Dolev et al.~\cite{SJC:DolDwoNao00} in the context of public-key encryption schemes. It ensures that an adversary cannot tamper with a ciphertext in a ``meaningful'' way. The idea was later adapted to commitment schemes \cite{STOC:DamGro03,JC:FisFis09} and codes \cite{ACM:DPW18}. 

In the context of secret sharing schemes, various notions of non-malleability have been considered. In this paper, we review the existing work on non-malleable secret sharing and suggest a novel game-based definition. The basic idea of our approach is that an adversary should not be able to modify shares in such a way that the reconstructed secret is related to the real secret in a pre-specified manner, according to a particular relation defined on the set of possible secrets.  

We provide a new construction of an unconditionally secure non-malleable threshold scheme with respect to additive  relations. To do so, we introduce a new type of algebraic manipulation detection (AMD) code and construct examples of new variations of external difference families (namely, circular external difference families), which are of independent combinatorial interest.

We begin by presenting background on secret sharing (threshold schemes) and robust secret sharing schemes. 

\subsection{Secret sharing}

A secret sharing scheme is a cryptographic primitive for splitting a secret into some shares and distributing the shares amongst a set of participants such that only certain authorized subsets of participants can reconstruct the secret. In a $(k,n)$-threshold scheme, there are $n$ participants and an authorized subset is any subset of size at least $k$. Constructions for threshold schemes were introduced independently in 1979 by Shamir \cite{ACM:Sha79} and Blakley \cite{MARK:Bla79}. Blakley's construction relies on finite geometries while Shamir's scheme uses polynomial interpolation to reconstruct secrets. 

Secret sharing schemes are mainly considered in an \emph{unconditionally secure} (or  \emph{information-theoretic}) setting. This setting ensures that the security guarantees hold regardless of the computational capabilities of the adversary. All schemes considered in this paper are assumed to be unconditionally secure.

The goals of unconditionally secure secret sharing schemes are twofold: correctness and perfect privacy. 
These are defined more formally as follows.

\begin{definition}[$(k,n)$-threshold scheme]
Let $1 < k \leq n$ and let $\mathcal{S}$ be the set of possible \emph{secrets}. There are $n$ participants in the scheme, denoted $P_1, \dots , P_n$, as well as an additional participant called the  \emph{dealer}.

In a $(k,n)$-threshold scheme, a {secret} $s \in \mathcal{S}$ is chosen by the  dealer. The dealer then constructs $n$ \emph{shares}, which we denote by $s_1, \dots , s_n$. The share $s_i$ is given to participant $P_i$, for $1 \leq i \leq n$.  

The following two properties should be satisfied.
\begin{description}
	\item[Correctness:] Any set of $k$ participants can reconstruct the secret from the shares that they hold collectively.
	\item[Perfect privacy:] No set of $k-1$ or fewer participants can obtain any information  about the secret from the shares that they hold collectively.
\end{description}
\end{definition}

The prototypical example of an unconditionally secure threshold scheme is Shamir's scheme \cite{ACM:Sha79}. For future use, we present Shamir's scheme now.

\begin{constructionx}[Shamir's Threshold Scheme]
Suppose $\mathbb{F}_q$ is a finite field, where $q$ is a prime or a prime power. The $(k,n)$-threshold scheme will  share a secret $s \in \mathbb{F}_q$, where $q \geq n+1$. 
\begin{description}
	\item[Share:] The dealer selects a random polynomial $f(x) \in \mathbb{F}_q[x]$ of degree $k-1$ such that $f(0) = s$.  Each share $s_i$ is an ordered pair, i.e., $s_i = (x_i, y_i)$, where the $x_i$’s are distinct and non-zero and $y_i = f(i)$. The $x_i$'s are public and the $y_i$'s are secret. The dealer gives share $s_i$ to participant $P_i$, for $i = 1,\dots,n$.
	\item[Recover:] Given $k$ shares, the participants use polynomial interpolation to reconstruct $f(x)$ and then they evaluate the polynomial $f(x)$ at $x=0$ to recover the secret $s$.
\end{description}
\end{constructionx}

In Shamir's scheme, we can use  polynomial interpolation to reconstruct the secret. In particular, we can use the \emph{Lagrange interpolation formula} to reconstruct the polynomial $f(x)$. Suppose we are working in a field $\mathbb{F}_q$ and are given $k$ (not necessarily distinct) elements in $\mathbb{F}_q$, say $y_1,\dots,y_k$. Let $x_1,\dots,x_k$ be distinct elements in $\mathbb{F}_q$. Then there is a unique polynomial $f(x) \in \mathbb{F}_q[x]$ with degree at most $k-1$ such that $f(x_i) = y_i$ for $1 \leq i \leq k$. The \emph{Lagrange interpolation formula} states that
\[
f(x) = \sum_{j=1}^k y_j \prod_{1 \leq h \leq k, h \neq j} \frac{x - x_h}{x_j - x_h}.
\]
When reconstructing secrets with Shamir's scheme, we are typically concerned with the evaluation of this polynomial at 0. In this case, it is sufficient to compute
\[
s = \sum_{j=1}^k y_j \prod_{1 \leq h \leq k, h \neq j} \frac{x_h}{x_h - x_j}.
\]
Now, if we define
\[
b_j = \prod_{1 \leq h \leq k, h \neq j} \frac{x_h}{x_h - x_j},
\]
for $1 \leq j \leq k$, then we can write $s = \sum_{j=1}^k b_jy_j$. The values $b_j$ are called the \emph{Lagrange coefficients} and they are publicly known values. The secret $s$ is just a linear combination of the $k$ shares.

\subsection{Robust secret sharing}
We review the notion of robust secret sharing due to Tompa and Woll \cite{JC:TomWol88}. This was the first paper to challenge the adversarial model of classical secret sharing. The work was motivated by the potential of malicious shareholders in a secret sharing scheme to cause an incorrect secret to be recovered by submitting incorrect shares. In the following scenario, we assume that the dealer is an honest participant, but the participants who receive shares may act maliciously. The game-based definition that follows is equivalent to the description given by Tompa and Woll \cite{JC:TomWol88}.

\begin{definition}[The Robustness Game]\label{robust.defn}
Assume a $(k,n)$-threshold scheme, where the secret $s$ is chosen equiprobably from the set $\mathcal{S}$ of possible secrets. Fix a non-negative integer $t$ such that $1 \leq t < k$. 

\begin{description}
\item[Step 1.] $t$ of the $n$ shares are given to the adversary.
The adversary modifies the $t$ shares to create new ``bad shares".

\item[Step 2.] A secret $s'$ is reconstructed using the $t$ ``bad shares" and $k-t$ of the original ``good shares". The adversary may choose which of the ``good shares" are used in reconstruction. The adversary wins the robustness game if the reconstructed secret $s'$ is a valid secret and $s' \neq s$.
\end{description}
Typically, we let $t = k-1$. For a positive real number $\epsilon <1$, if the adversary can only win this game with probability at most $\epsilon$ ($\epsilon$ is the \emph{cheating probability}), then we say that the threshold scheme is $\epsilon$-\emph{robust}.
\end{definition}

\begin{remark}
When we say the adversary ``may choose which of the `good shares' are used in reconstruction," this does not mean that the adversary gets to see the values of the shares. The intention is that the adversary is able to select some identifiers denoting which shares to use in reconstruction. For example, in Shamir's scheme, it is sufficient to allow the adversary to select certain $x$-coordinates corresponding to shares. Knowing these $x$-coordinates allows the adversary to use their knowledge of publicly known Lagrange coefficients to perform attacks that would otherwise be impossible if they had no control over which shares were used during reconstruction. This is demonstrated in \Cref{thm.tw1}.
\end{remark}


\begin{theorem}\label{thm.tw1} \cite{JC:TomWol88}
A $(k,n)$-Shamir threshold scheme is not $\epsilon$-robust for any $\epsilon < 1$.
\end{theorem}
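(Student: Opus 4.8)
The plan is to exhibit an explicit adversarial strategy that wins the robustness game with probability $1$, which immediately rules out $\epsilon$-robustness for every $\epsilon < 1$. I would work with $t = k-1$ (as suggested in the definition), so that reconstruction combines the $k-1$ tampered shares with a single untouched honest share. Recall from the Lagrange interpolation formula that the recovered secret is the linear combination $s = \sum_{j=1}^{k} b_j y_j$, where the $b_j$ are the publicly known Lagrange coefficients determined by the $x$-coordinates of the $k$ shares used in reconstruction.

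First I would record two structural facts about Shamir's scheme. One, the secret space is all of $\mathbb{F}_q$, so \emph{every} field element is a valid secret; there is no redundancy that would allow a modified secret to be flagged as invalid. Two, each Lagrange coefficient is nonzero, since $b_j = \prod_{h \neq j} x_h/(x_h - x_j)$ is a product of nonzero factors (the $x_h$ are all nonzero and pairwise distinct, so each numerator and each denominator is nonzero). These two facts are what the preceding remark is alluding to.

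Next I would describe the attack. The adversary holds the $k-1$ shares slated for tampering, so it knows their $y$-coordinates; it also knows the public $x$-coordinates of all shares used in reconstruction, and hence the Lagrange coefficients $b_j$. Let $x_1$ be the $x$-coordinate of one tampered share. The adversary simply replaces $y_1$ by $y_1' = y_1 + c$ for an arbitrary fixed nonzero $c \in \mathbb{F}_q$, leaving every other share unchanged. By linearity of reconstruction, the recovered secret becomes $s' = s + b_1 c$. Since $b_1 \neq 0$ and $c \neq 0$, we get $s' \neq s$, and since $s' \in \mathbb{F}_q$ it is automatically a valid secret.

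Finally I would conclude that the adversary wins with certainty, independently of the secret $s$ (indeed without ever learning $s$), so the cheating probability equals $1$ and the scheme is not $\epsilon$-robust for any $\epsilon < 1$. I do not expect a genuine obstacle: the whole argument rests on the linearity of Lagrange reconstruction together with the adversary's knowledge of the public Lagrange coefficients. The only point requiring mild care is verifying $b_1 \neq 0$, which guarantees that the induced shift $b_1 c$ is genuinely nonzero and hence that the attack always changes the secret.
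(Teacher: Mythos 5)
Your proposal is correct and follows essentially the same route as the paper's proof: modify a single share by a nonzero additive shift $\delta$, use linearity of Lagrange reconstruction to conclude $s' = s + b_1\delta \neq s$, and note that $s'$ is automatically a valid secret. Your additional remarks (that every element of $\mathbb{F}_q$ is a valid secret and that each Lagrange coefficient is nonzero) are just explicit justifications of facts the paper uses implicitly.
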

\begin{proof}
We describe how the adversary can always carry out a successful attack by modifying only a single share. Suppose the scheme has secrets in
$\eff_q$ and suppose the shares $(x_1,y_1),\dots,(x_k,y_k)$ are used to recover the secret $s$.

The true secret, $s$, can be written as $s = \sum_{i=1}^k b_iy_i$ where the $b_i$'s denote the Lagrange coefficients corresponding to $x_1, \dots , x_k$ and the $y_i$'s are shares, for $i = 1,\dots,k$.

Assume that the adversary modifies a single share $y_1$ by adding some nonzero value $\delta$ to it. Then the adversary submits the incorrect share $y_1' = y_1 + \delta$. The reconstructed secret will be
\begin{eqnarray*}
s' &=& b_1y_1' + \sum_{i=2}^k b_iy_i \\ &=& \sum_{i=1}^k b_iy_i + b_1\delta\\
&=& s + b_1\delta.
\end{eqnarray*}

The reconstructed secret $s'$ is valid since there are no restrictions on the secrets in Shamir's original scheme. As long as $\delta \neq 0$, we have $s' \neq s$, so the adversary wins the robustness game.
\end{proof}

On the other hand, a robust threshold scheme can be achieved using various methods, the most general of which employ  \emph{AMD codes}. AMD codes were defined by defined by Cramer \emph{et al.} \cite{EC:CDFPW08}; additional information can be found in \cite{CFP,CPX}. 
Here we use the game-based definitions of AMD codes from \cite{DM:PatSti16}. There are two basic forms of AMD codes: \emph{weak}  and \emph{strong}; 
we first consider weak AMD codes. 

\begin{definition}[Weak AMD code]\label{def.amd}
Let $\mathcal{G}$ be an additive abelian group of order $n$ and let $\mathcal{A} = \{{A}_0,\dots,{A}_{m-1}\}$ be $m$ pairwise disjoint $\ell$-subsets of $\mathcal{G}$. Let $0 < \epsilon < 1$. Then $(\mathcal{G},\mathcal{A})$ is an \emph{$\epsilon$-secure weak $(n,m,\ell)$-AMD code} if an adversary cannot win the following \emph{AMD game} with  probability greater than $\epsilon$.
\begin{enumerate}
\item The adversary chooses a value $\Delta \in \mathcal{G} \setminus \{0\}$. 
\item The \emph{source} $i \in \{0,\dots,m-1\}$ is chosen uniformly at random.
\item The source is encoded by choosing $g$ uniformly at random from ${A}_i$.
\item The adversary wins if and only if $g + \Delta \in A_j$ for some $j \neq i$. 
\end{enumerate}
\end{definition}

\begin{constructionx}[Robust Secret Sharing Scheme] 
\label{robust.const}
We  use an $\epsilon$-secure weak $(n,m,\ell)$-AMD code defined over $\eff_q$ to construct an $\epsilon$-robust $(k,n)$-threshold scheme having secrets from $\mathcal{S} = \{0, \dots , m-1\}$.  Suppose that the secret $s \in \{0, \dots , m-1\}$ is chosen uniformly at random.

\begin{description}
\item[Share:] The dealer chooses an element $K \in A_s$ uniformly at random. The value $K$ is called the \emph{encoded secret}. Then the dealer computes shares for $K$ using the usual Shamir threshold scheme over $\mathbb{F}_q$.
\item[Reconstruct:] A set of $k$ players determine the encoded secret $K$ using polynomial interpolation. 
Then they determine the value $s$ such that $K \in A_s$. This value $s$ is the secret.
\end{description}
\end{constructionx}

We now sketch a proof that the above-described threshold scheme is $\epsilon$-\emph{robust}. 
Suppose the shares $y_1,\dots,y_k$ are used to reconstruct a polynomial and thereby determine the encoded secret, $K$.
We have $K = \sum_{i=1}^k b_iy_i$ where the $b_i$'s are the Lagrange coefficients. The shares $y_1,\dots,y_{k-1}$ are known to the adversary and can be modified. The share $y_k$ is not known to the adversary.

Assume that the adversary creates fake shares $y'_1, \dots ,y'_{k-1}$, where
$y'_i = y_i + \delta_i$ for $1 \leq i \leq k-1$. 
The reconstructed encoded secret will be
\begin{eqnarray*}
K' &=&  \sum_{i=1}^{k-1} b_iy'_i  + b_ky_k \\
&=&  \sum_{i=1}^{k-1} b_i(y_i + \delta_i)  + b_ky_k \\
&=&  K + \sum_{i=1}^{k-1} b_i\delta_i .
\end{eqnarray*}
In order for the adversary to win the robustness game, it must be the case that 
\[ \Delta = \sum_{i=1}^{k-1} b_i\delta_i \neq 0.\]
Define $\delta = \Delta (b_1)^{-1}$ (note that all the Lagrange coefficients are nonzero). Then the single modified share  
$y''_1 = y_1 + \delta$ yields the same reconstructed encoded secret as the above-described $k-1$ modified shares, because
\begin{eqnarray*}
b_1y''_1 + \sum_{i=2}^{k} b_iy_i  
&=&  \sum_{i=1}^{k} b_iy_i  + b_1 \delta  \\
&=&  K + \Delta\\
&=& K'.
\end{eqnarray*}

Hence, we can assume that the adversary only modifies a single share, say $y_1$. 
Further, this modification has the result that the encoded secret $K$ is modified to $K + \Delta$. The value $\Delta$ is chosen by the adversary, but $K$ is not known to the adversary. We are assuming that $K$ is determined by choosing a secret $s \in \{0, \dots , m-1\}$ uniformly at random and then encoding it by choosing $K$ uniformly at random from  $A_s$. So we are in the setting assumed by a weak AMD code. Hence the adversary wins the robustness game with probability at most $\epsilon$, because the AMD code is assumed to be $\epsilon$-secure.

\medskip

If there is  an arbitrary (nonuniform) distribution on the set of secrets, we can instead use a strong AMD code. In a strong AMD code, the adversary knows the source (but not the encoded source) before they choose $\Delta$.

\begin{definition}[Strong AMD code]
Let $\mathcal{G}$ be an additive abelian group of order $n$ and let $\mathcal{A} = \{{A}_0,\dots,{A}_{m-1}\}$ be $m$ pairwise disjoint $\ell$-subsets of $\mathcal{G}$. Let $0 < \epsilon < 1$. Then $(\mathcal{G},\mathcal{A})$ is an \emph{$\epsilon$-secure strong $(n,m,\ell)$-AMD code} if an adversary cannot win the following \emph{strong AMD game}  with  probability greater than $\epsilon$.
\begin{enumerate}
\item The \emph{source} $i \in \{0,\dots,m-1\}$ is specified and given to the adversary.
\item The adversary chooses a value $\Delta \in \mathcal{G} \setminus \{0\}$. 
\item The source is encoded by choosing $g$ uniformly at random from ${A}_i$.
\item The adversary wins if and only if $g + \Delta \in A_j$ for some $j \neq i$. 
\end{enumerate}
\end{definition}

It can be shown that if we modify Construction \ref{robust.const} so that it uses a strong AMD code in place of a weak AMD code, then the resulting threshold scheme is $\epsilon$-\emph{robust} for any probability distribution defined on the set of secrets $\{0, \dots , m-1\}$.

\subsection{Organization of the paper}

The rest of the paper is organized as follows. In Section \ref{sec.oldnmss}, we review existing definitions of non-malleable secret sharing. Section \ref{sec.newnmss} introduces our new approach, which is based on preventing specific types of modification of the secret (by an adversary) that depend on a certain relation. In Section \ref{circ.sec}, we define new modifications of algebraic manipulation (AMD) codes that are appropriate for constructing non-malleable secret sharing schemes. We also show how optimal AMD codes of the desired type can be constructed from new types of difference families that we term ``circular external difference families''. Some variations are studied in Sections \ref{general.sec} and \ref{strong.sec}. Finally, Section \ref{summary.sec} is a brief summary and conclusion.

\section{Non-malleable secret sharing}\label{sec.oldnmss}

In this section, we review various definitions that have been proposed for non-malleable secret sharing. However, we should note that,
in general, non-malleability is a weaker notion than robustness. Robustness protects against arbitrary modifications of up to $k-1$ shares in a $(k,n)$-threshold scheme. On the other hand, non-malleability only guards against certain specified types of share modifications.

As far as we know, the first mention of non-malleable secret sharing in the literature was in a 2006 PhD thesis by Kenthapadi \cite{Kenthapadi} and in a corresponding paper on distributed noise generation \cite{EC:DKMMN06}. The authors refer to \emph{non-malleable verifiable secret sharing} as an extension of verifiable secret sharing (VSS). Although they do not provide a formal definition or construction of non-malleable VSS, the papers provide a brief, high-level definition. The authors state that ``a non-malleable VSS scheme ensures that the values shared by a non-faulty processor are completely independent of the values shared by the other processors; even exact copying is prevented." 


Another approach was suggested in the extended version \cite{C:IshPraSah10} of a 2008 paper on secure multi-party computation \cite{C:IshPraSah08}. This paper defines a $(2,2)$-threshold scheme to be  \emph{non-malleable} if an adversary cannot win the following game.

\begin{definition}[The IPS Malleability Game] \quad \vspace{-.2in}\label{nm.defn}\\
\begin{description}
\item[Step 1.] In a $(2,2)$-threshold scheme, two shares, $a$ and $b$, are generated for some secret $s$.
\item[Step 2.] The adversary modifies a single share, say  $a \rightarrow \tilde{a}$. 
\item[Step 3.] The reconstruction algorithm takes as input $\tilde{a}$ and $b$. The adversary wins if $\tilde{a} \neq a$ and the reconstruction algorithm outputs some valid secret $s'$.
\end{description}
\end{definition}

If the adversary can win the above game with probability at most $\epsilon$, they say that the scheme is $\epsilon$-non-malleable. Later work studying fairness in secure computation used the same definition of non-malleable secret sharing \cite{C:BenKum14, Gordon, TCC:GIMOS10}, as did a paper on universal composability \cite{C:Rosulek12b}. It has been noted that this definition of non-malleability can be achieved using AMD codes in the same way that they had previously been used to provide robustness \cite{EC:CDFPW08, Gordon}. 

\begin{remark}
Definition \ref{nm.defn} is very similar to the definition we provided for robust secret sharing (Definition \ref{robust.defn}). Aside from the fact that Definition \ref{nm.defn} is restricted to the case of $(2,2)$-threshold schemes (although this is easily generalized), the only remaining difference is that the IPS Malleability game checks if the inputted share $\tilde{a} \neq a$ rather than checking whether the reconstructed secret $s' \neq s$. This is a slightly stronger definition, given that if the reconstructed secret $s' \neq s$, then this implies that the inputted shares must differ. On the other hand, different shares do not necessarily imply a different reconstructed secret. 
\end{remark}

A third line of work began in 2010 with the introduction of non-malleable codes \cite{ICS:DPW10, ACM:DPW18}. In coding theory, non-malleable codes can be considered as a relaxation of error-correcting and error-detecting codes. The previously discussed AMD codes are an example of error-detecting codes. The definition of non-malleability involves families of \emph{tampering functions}. The goal is that tampering results in a decoded message that either is correct or ``independent of and unrelated to'' the original message. 

Goyal and Kumar initiated a formal study of non-malleable secret sharing \cite{STOC:GoyKum18}, where they presented non-malleable threshold schemes based on non-malleable codes. They provide formal definitions for non-malleable secret sharing that are generalized versions of non-malleable codes in the so-called ``split-state model''. Most, but not all, of the recent work that refers to non-malleable secret sharing is based on these definitions. The initial study focused on threshold schemes while later work provided constructions for general access structures \cite{C:ADNOPRS19, EC:BadSri19, C:GoyKum18}. These constructions 
satisfy statistical privacy and statistical non-malleability. A few followup works considered these definitions in the computational setting \cite{TCC:BriFaoVen19, C:FaoVen19}. 

Finally, there is a definition of non-malleable secret sharing in the unconditionally secure setting \cite{batchdpir} based on the ideas from Goyal and Kumar \cite{STOC:GoyKum18}; however, this definition is specific to \emph{incremental} secret sharing schemes. In an incremental scheme, a running tally is maintained as shares are submitted and the final tally corresponds to the original secret. This definition (and corresponding construction) does not extend to the general setting because, for example, their game-based definition precludes the possibility of the last share being controlled by the adversary.

\section{A new approach to non-malleable secret sharing}\label{sec.newnmss}

We propose a new game-based definition of non-malleability, based on a specified binary 
relation $\sim$ on the set $\mathcal{S}$ of possible secrets (thus we can regard $\sim$ as a subset of $\mathcal{S} \times \mathcal{S}$).
We are only interested in \emph{irreflexive} relations (i.e., relations in which $s \sim s$ never holds) in this paper.

The basic idea is that the adversary's goal is to modify one or more shares in such a way that 
$s' \sim s$, where $s$ is the true secret and $s' \neq s$ is the reconstructed secret. 
With this approach, we can provide more fine-grained constructions of non-malleable schemes, which are secure with respect to specific relations. 


\begin{definition}[The $\sim$-Malleability Game] 
Assume a $(k,n)$-threshold scheme. The secret is $s$. Fix some $t$, $1 \leq t < k$, and let $\sim$ be a specified binary relation over the set of $\mathcal{S}$ of possible valid secrets. 
\begin{description}
\item[Step 1.] The dealer chooses a secret $s \in \mathcal{S}$ and constructs $n$ valid shares for the secret $s$.
\item[Step 2.] $t$ of the $n$ shares are given to the adversary.
 The adversary then modifies the $t$ shares to create new ``bad shares".
\item[Step 3.] A secret $s'$ is reconstructed using $k$ shares chosen by the adversary, $t$ of which are the ``bad shares". The good shares used during reconstruction may be chosen (but not modified) by the adversary; however, their values are not known to the adversary.
\end{description}
The adversary wins the $\sim$-malleability game if the reconstructed secret $s'$ is a valid secret such that $s' \neq s$ and $s' \sim s$.

If the adversary cannot win the non-malleability game with probability greater than $\epsilon$, then we say that the scheme is an
\emph{$\epsilon$-secure $t$-non-malleable} threshold scheme with respect to the binary relation $\sim$.
\end{definition}

We note that if we define the relation $\sim$ by $a' \sim a$ if and only if $a \neq a'$, then the only requirement for the adversary to win the malleability game  is that $s' \neq s$. Then, this definition is equivalent to the definition of robust secret sharing. It follows from this observation that if a $(k,n)$-threshold scheme is robust, it is also non-malleable with respect to the relation $\neq$.

\begin{remark}
Our definition of non-malleable secret sharing is motivated by considerations raised in the initial study of non-malleable cryptography given in \cite{SJC:DolDwoNao00}, where it is stated that ``given the ciphertext, it is impossible to generate a different ciphertext so that the respective plaintexts are related.''
\end{remark}

We will pay particular attention to ``additive'' relations, which we define now. 

\begin{definition}
Suppose that $m$ is a fixed positive integer, and let $0 < c < m$. 
Define the relation $\sim_{c}$ as follows: $s' \sim s$ if and only if $s' = s+c \bmod m$.
\end{definition}

We now prove that a $(k,n)$-Shamir threshold scheme is not non-malleable, i.e., it is malleable, for a  relation $\sim_c$ of this type.

\begin{theorem}
Suppose $p$ is prime and $0 < c < p$. Then a  $(k,n)$-Shamir threshold scheme with secrets in $\zed_p$ is malleable with respect to the relation $\sim_c$.
\end{theorem}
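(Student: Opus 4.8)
The plan is to adapt the attack from \Cref{thm.tw1} essentially verbatim, exploiting the fact that over $\zed_p$ the reconstructed secret is a fixed linear combination of shares with publicly known, nonzero Lagrange coefficients. Concretely, I would have the adversary corrupt a single share, say $y_1$, replacing it by $y_1' = y_1 + \delta$ for a carefully chosen $\delta$. As computed in the proof of \Cref{thm.tw1}, the reconstructed secret then becomes $s' = s + b_1\delta \bmod p$, where $b_1$ is the (nonzero, public) Lagrange coefficient associated with the $x$-coordinate of the first share.

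First I would observe that the adversary's goal is now more specific than in the robustness game: rather than merely forcing $s' \neq s$, the adversary must force $s' = s + c \bmod p$. Matching the expression $s' = s + b_1\delta$ against the target $s + c$, the adversary must solve $b_1 \delta \equiv c \pmod p$. Since $p$ is prime, $\zed_p$ is a field and $b_1$ is invertible, so the unique solution is $\delta = c\,(b_1)^{-1} \bmod p$. Because $0 < c < p$, we have $c \neq 0$, hence $\delta \neq 0$, so this is a genuine (nonzero) modification of the share.

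Next I would verify that this $\delta$ makes the adversary win the $\sim_c$-malleability game. With this choice, $s' = s + c \bmod p$, so $s' \sim s$ by definition of $\sim_c$; moreover $s' \neq s$ precisely because $c \neq 0$; and $s'$ is automatically a valid secret since Shamir's scheme over $\zed_p$ places no restriction on the set of secrets (every element of $\zed_p$ is valid). Crucially, the attack requires no knowledge of the actual share value $y_1$ or of the secret $s$: the adversary needs only the public $x$-coordinates of the reconstruction set to compute $b_1$ and hence $\delta$, which is consistent with the rules of the malleability game (the good shares' values are hidden but the identifiers/$x$-coordinates are under the adversary's control). This succeeds with probability $1$, so the scheme is malleable with respect to $\sim_c$.

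I do not expect a serious obstacle here; the only point requiring care is the role of primality. The argument needs $b_1$ to be invertible so that $\delta = c\,(b_1)^{-1}$ is well defined, which is exactly why the hypothesis takes the modulus to be prime (so that $\zed_p$ is a field) rather than an arbitrary composite $m$; I would flag this explicitly, since it is the one place where the additive structure modulo $m$ in the definition of $\sim_c$ interacts nontrivially with the field structure underlying Shamir's scheme.
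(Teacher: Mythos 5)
Your proposal is correct and follows essentially the same route as the paper's own proof: modify a single share by $\delta = c(b_1)^{-1} \bmod p$ so that the reconstructed secret becomes $s + c$, using the public nonzero Lagrange coefficient. The extra remarks on validity of $s'$, the nonvanishing of $\delta$, and the role of primality are sound but do not change the argument.
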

\begin{proof}
We describe how the adversary can carry out a successful attack by modifying only a single share. Suppose the shares $v_1,\dots,v_k$ are used to reconstruct a polynomial. 
Assume that the adversary modifies a single share $v_1$ by adding some nonzero value $\delta$. That is, upon reconstruction, the adversary submits $v_1' = v_1 + \delta \bmod p$.
Then, as was shown in the  the proof of Theorem \ref{thm.tw1}, the reconstructed secret will be
\[
s' = b_1v_1' + \sum_{i=2}^k b_iv_i = s + b_1\delta \bmod p,
\]
where $b_1$ is a (known) Lagrange coefficient.
If the adversary  chooses $\delta = c(b_1)^{-1} \bmod p$, then $s' \sim_c s$, as desired, and the adversary wins the $\sim_c$-malleability game.
\end{proof}

\section{Circular external difference families}
\label{circ.sec}
In this section we construct unconditionally secure threshold schemes that are non-malleable with respect to a relation $\sim_c$. Our construction is based on an appropriate modification of AMD codes. 
Suppose we modify \Cref{def.amd} as follows:

\begin{definition}[Circular weak AMD code]
Let $\mathcal{G}$ be an additive abelian group of order $n$ and let $\mathcal{A} = \{{A}_0,\dots,{A}_{m-1}\}$ be $m$ pairwise disjoint $\ell$-subsets of $\mathcal{G}$. Let $0 < \epsilon < 1$ and let $c$ be a fixed integer such that $1 \leq c \leq m-1$. Then $(\mathcal{G},\mathcal{A})$ is an \emph{$\epsilon$-secure $c$-circular $(n,m,\ell)$-AMD code} if an adversary cannot win the following \emph{circular AMD game} with  probability greater than $\epsilon$.
\begin{enumerate}
\item The adversary chooses a value $\Delta \in \mathcal{G} \setminus \{0\}$. 
\item The \emph{source} $i \in \{0,\dots,m-1\}$ is chosen uniformly at random.
\item The source is encoded by choosing $g$ uniformly at random from ${A}_i$.
\item The adversary wins if and only if $g + \Delta \in A_j$ where $j = i + c \bmod m$. 
\end{enumerate}
\end{definition}

The only difference between this definition and \Cref{def.amd} is the additional requirement in part 4 that $j = i + c \bmod m$. If we can construct such a code, then we will immediately obtain a threshold scheme that is non-malleable with respect to the relation $\sim_c$.

Previous work has studied the connection between AMD codes and external difference families (e.g., see \cite{DM:PatSti16}). We follow suit by defining a modification of external difference families that will yield (optimal) circular AMD codes. First, we define some notation. Let $\mathcal{G}$ be an abelian group. For any two disjoint sets $A_1,A_2 \subseteq \mathcal{G}$, define
\[
\mathcal{D}(A_1,A_2) = \{x-y : x \in A_1,y\in A_2\}.
\]
Note that $\mathcal{D}(A_1,A_2)$ is a multiset.

\begin{definition}[Circular external difference family (CEDF)]
Let $G$ be an additive abelian group of order $n$. Suppose $m \geq 2$ and $1 \leq c \leq m-1$. 
An $(n, m, \ell; \lambda)$-$c$-circular external difference family (or $(n, m, \ell; \lambda)$-$c$-CEDF) is a set of $m$ disjoint $\ell$-subsets of $G$, say $\mathcal{A} = (A_0,\dots,A_{m-1})$, such that the following multiset equation holds:
\[
\bigcup_{j=0}^{m-1} \mathcal{D}(A_{j+c\bmod m}, A_j) = \lambda (G \setminus \{0\}).
\]
We observe that $m \ell^2 = \lambda (n-1)$ if an $(n, m, \ell; \lambda)$-$c$-CEDF exists.
\end{definition}

\begin{example}
\label{E13-3-2-1}
The following three sets of size $2$ form a $(13, 3, 2, 1)$-$1$-CEDF in $\mathbb{Z}_{13}$:
\[
\mathcal{A} = (\{1,12\},\{4,9\},\{3,10\}).
\]
This is easily verified from the following computations:
\begin{eqnarray*}
\mathcal{D}(A_{1}, A_0) & = & \{ 3,5,8,10\}\\
\mathcal{D}(A_{2}, A_1) & = & \{ 12,7,6,1\}\\
\mathcal{D}(A_{0}, A_2) & = & \{ 11,4,9,2\}.
\end{eqnarray*}
\end{example}

Paterson and Stinson \cite{DM:PatSti16} define  R-optimal (weak) AMD codes and prove their equivalence to external difference families. A weak AMD code is \emph{R-optimal} if the adversary's probability of winning the game in \Cref{def.amd} is minimized.

\begin{definition}[R-optimal weak AMD code]
A weak $(n,m,\ell)$-AMD code is \emph{R-optimal} if it is $\epsilon$-secure with 
\[
\epsilon = \frac{\ell(m-1)}{n-1}.
\]
\end{definition}

In the case of weak circular AMD codes, the appropriate definition of $R$-optimality is the following.

\begin{definition}[R-optimal weak circular AMD code]
A weak $c$-circular $(n,m,\ell)$-AMD code is \emph{R-optimal} if it is $\epsilon$-secure with 
\[
\epsilon = \frac{\ell}{n-1}.
\]
\end{definition}

The following theorem is proven using an argument similar to the proof of  \cite[Theorem~3.10]{DM:PatSti16}, 
which established the equivalence of external difference families and optimal AMD codes.

\begin{theorem}\label{thm.camdtocedf}
An R-optimal $c$-circular weak $(n,m,\ell)$-AMD code is equivalent to an $(n,m,\ell;\lambda)$-$c$-CEDF.
\end{theorem}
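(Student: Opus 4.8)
The plan is to compute the adversary's exact success probability in the circular AMD game and observe that driving it down to the R-optimal value $\ell/(n-1)$ is literally the equidistribution requirement in the definition of a CEDF. Write $\mathcal{A} = (A_0, \ldots, A_{m-1})$ and, for each nonzero $\Delta \in \mathcal{G}$, let $N(\Delta)$ denote the multiplicity of $\Delta$ in the combined multiset $\bigcup_{i=0}^{m-1} \D(A_{i+c \bmod m}, A_i)$. The whole proof will hinge on identifying the adversary's winning probability with $N(\Delta)$.

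First I would fix the adversary's choice $\Delta$ and condition on the source $i$. Since the source is uniform over $\{0, \ldots, m-1\}$ and $g$ is uniform over $A_i$, the probability of winning equals $\frac{1}{m\ell} \sum_{i=0}^{m-1} |\{g \in A_i : g + \Delta \in A_{i+c \bmod m}\}|$. The key observation is that $g \in A_i$ with $g + \Delta \in A_{i+c \bmod m}$ corresponds exactly to a pair $(x,g)$ with $x \in A_{i+c \bmod m}$, $g \in A_i$, and $x - g = \Delta$; hence the inner sum over $i$ is exactly $N(\Delta)$, and the winning probability for this $\Delta$ is $N(\Delta)/(m\ell)$. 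Since the adversary chooses $\Delta$ to maximize this, the exact cheating probability is $\max_{\Delta \neq 0} N(\Delta)/(m\ell)$. Next I would count globally: because $1 \leq c \leq m-1$ forces $i + c \not\equiv i \pmod m$, the sets $A_{i+c \bmod m}$ and $A_i$ are distinct hence disjoint, so $0$ never occurs as a difference and each $\D(A_{i+c \bmod m}, A_i)$ consists of exactly $\ell^2$ nonzero elements. Summing over $i$ gives $\sum_{\Delta \neq 0} N(\Delta) = m\ell^2$, distributed over the $n-1$ nonzero elements of $\mathcal{G}$. An averaging argument then yields $\max_{\Delta \neq 0} N(\Delta) \geq m\ell^2/(n-1)$, so the cheating probability is always at least $\ell/(n-1)$, which is precisely the R-optimality threshold. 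Equality holds if and only if $N(\Delta)$ is constant over all nonzero $\Delta$, that is $N(\Delta) = m\ell^2/(n-1) =: \lambda$ for every $\Delta \neq 0$; this is exactly the statement $\bigcup_{i} \D(A_{i+c \bmod m}, A_i) = \lambda(\mathcal{G} \setminus \{0\})$, i.e.\ the CEDF condition, with the numerical relation $m\ell^2 = \lambda(n-1)$ built in. Reading this chain in both directions shows that the code attains the bound exactly when $\mathcal{A}$ is an $(n,m,\ell;\lambda)$-$c$-CEDF, giving the equivalence.

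The argument is essentially routine once the winning probability is rewritten as a difference multiplicity, so I do not anticipate a serious obstacle; the only points demanding care are the index bookkeeping under the cyclic shift $i \mapsto i + c \bmod m$, and the verification, using disjointness together with $c \not\equiv 0 \pmod m$, that $0$ contributes nothing to the difference multiset, so that the total is exactly $m\ell^2$ and the averaging is over $n-1$ rather than $n$ values. This parallels the structure of \cite[Theorem~3.10]{DM:PatSti16}, the main adaptation being that the ``any $j \neq i$'' winning condition there is replaced here by the single shifted target $j = i + c \bmod m$, which is what collapses the $(m-1)$ factor in the R-optimal $\epsilon$ down to $1$.
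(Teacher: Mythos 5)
Your proposal is correct and follows essentially the same route as the paper: the quantity you call $N(\Delta)$ is exactly the paper's $|\mathcal{T}_{\Delta}|$, and both arguments compute the cheating probability as $N(\Delta)/(m\ell)$, apply the averaging bound $\sum_{\Delta\neq 0}N(\Delta)=m\ell^2$ over $n-1$ nonzero elements, and identify the equality case with equidistribution, i.e.\ the CEDF condition. The only difference is cosmetic (you are slightly more explicit about why $0$ never arises as a difference and about the bijection between winning events and difference pairs).
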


\begin{proof}
Suppose $\mathcal{A} = \{{A}_0,\dots,{A}_{m-1}\}$ is a $c$-circular weak $(n,m,\ell)$-AMD code over an abelian group $\mathcal{G}$.
Define \[\mathcal{T} = \{ (i,x,g): 0 \leq i \leq m-1, x \in A_i, g \in G \setminus \{0\}, x+g \in A_{i + c \bmod m}\}.\] We observe that
$|\mathcal{T}| = m \ell^2$. For any fixed value $\Delta \in G \setminus \{0\}$, define \[\mathcal{T}_{\Delta} = 
\{  (i,x,\Delta) \in \mathcal{T}  \}.\] 

Suppose the adversary  chooses a particular value $\Delta$ in the AMD game. 
The adversary wins the AMD game if $(i,x,\Delta) \in \mathcal{T}_{\Delta}$. 
There are  ${m \ell}$ pairs $(i,x)$ such that $(i,x,g) \in \mathcal{T}$ for some $g$ and each of these pairs is equally likely.
 Hence, for a given value of $\Delta$, the success probability of the adversary in winning the AMD game is 
\[ \epsilon_{\Delta} = \frac{|\mathcal{T}_{\Delta}|}{m \ell}.\]

Now, since 
\[ \sum_{\Delta \in G \setminus \{0\}} |\mathcal{T}_{\Delta}| = |\mathcal{T}| = m \ell^2,\]
it follows that there exists $\Delta \in G \setminus \{0\}$ such that 
\[  |\mathcal{T}_{\Delta}| \geq \frac{m \ell ^2}{n-1},\]
and hence \[ \max_{\Delta} \{ \epsilon_{\Delta}\} \geq \frac{\ell}{n-1}.\]

We  further observe that \[\max_{\Delta} \{ \epsilon_{\Delta}\} = \frac{\ell}{n-1}\]
if and only if \[|\mathcal{T}_{\Delta}| = \frac{m \ell ^2}{n-1}\] for every $\Delta \in G \setminus \{0\}$.
But this says that the AMD code is $(n,m,\ell;\lambda)$-CEDF with $\lambda = m \ell^2 / (n-1)$.

The converse is straightforward; we leave the details for the reader to verify.
\end{proof}

Suppose we fix the value $c$ to be $c=1$. 
In order to construct an optimal circular AMD code (and the corresponding non-malleable threshold scheme with respect to $\sim_1$), we need only construct an $(n,m,\ell;\lambda)$-$1$-CEDF. We present some simple examples next.

\begin{constructionx}
\label{const1}
Suppose $q = m\ell^2+1$ is a prime power. There is a natural way to try to construct a $(q,m,\ell,1)$-$1$-CEDF based on taking the unique subgroup of
${\eff_q}^*$ of order $\ell$ along with some cosets. Let $\alpha \in \eff_q$ be a primitive element. 
Define
\[ C_0 = \{  \alpha^{i\ell m} : 0 \leq i \leq \ell-1\},\] and for $1 \leq j \leq m-1$, define
\[ C_j = \alpha^{\ell j} C_0,\]
where all arithmetic is in $\eff_q$. Finally, let $\A = (C_0, \dots , C_{m-1})$. 
\end{constructionx}

Construction \ref{const1} may or may not yield a $(q,m,\ell,1)$-$1$-CEDF.
However,  Example \ref{E13-3-2-1} is obtained by this method. 
It frequently happens that this construction yields a $(q,m,\ell,1)$-$1$-CEDF. Here are a few more small examples
obtained from Construction \ref{const1} using groups of  prime order.

\begin{example} 
\label{ex1742}
{\rm 
The following four sets of size $2$ form a $(17,4,2,1)$-$1$-CEDF in $\zed_{17}$:
\[\A = (
\{1 ,16\},
\{9 ,8\},
\{13 ,4\},
\{15 ,2)\}).\]}
\end{example}

\begin{example} 
\label{ex15165}
{\rm 
The following six sets of size $5$ form a $(151,6,5,1)$-$1$-CEDF in $\zed_{151}$:
\begin{eqnarray*}
\A &=& (
\{1 ,59 ,8 ,19 ,64\},
\{75 ,46, 147, 66 ,119\},
\{38 ,128, 2 ,118, 16\},\\ &&
\{132, 87, 150 ,92 ,143\},
\{85 ,32 ,76, 105 ,4\},
\{33 ,135, 113 ,23 ,149\}
).
\end{eqnarray*}
}
\end{example}

\begin{example} 
\label{ex2972}
{\rm 
The following seven sets of size $2$ form a $(29,7,2,1)$-$1$-CEDF in $\zed_{29}$:
\[\A =(
\{1, 28\},
\{4,25\},
\{16,13\},
\{6,23\},
\{24,5\},
\{9,20\},
\{7,22\}
).\]
}
\end{example}

\begin{example} 
\label{ex7383}
{\rm 
The following eight sets of size $3$ form a $(73,8,3,1)$-$1$-CEDF in $\zed_{73}$:
\begin{eqnarray*}
\A &=& (
\{1, 8 ,64\},
\{52, 51, 43\},
\{3 ,24, 46\},
\{10, 7 ,56\},\\ &&
\{9 ,72, 65\},
\{30, 21, 22\},
\{27 ,70 ,49\},
\{17, 63, 66\}
).
\end{eqnarray*}
}
\end{example}

In the case $\ell=2$, we have a simple condition that determines when Construction \ref{const1} yields a $(q,m,2,1)$-$1$-CEDF in $\eff_q$.

\begin{theorem}
\label{thm1}
Suppose that $q = 4m+1$ is a prime power and $\alpha$ is a primitive element of $\eff_q$. Then 
Construction \ref{const1} with $\ell = 2$ yields a $(q,m,2,1)$-$1$-CEDF in $\eff_q$ if and only if $\alpha^4-1$ is a quadratic non-residue in ${\eff_q}^*$.
\end{theorem}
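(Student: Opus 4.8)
The plan is to unwind Construction~\ref{const1} in the case $\ell=2$ and reduce the CEDF condition to a statement about cosets of the subgroup of squares in $\eff_q^*$. First I would record the explicit form of the sets. Since $\ell=2$ and $\ell m = 2m$, we have $C_0 = \{1,\alpha^{2m}\}$; because $\alpha$ is primitive, $\alpha^{2m}=\alpha^{(q-1)/2}$ is the unique element of order two, so $C_0 = \{1,-1\}$ and hence $C_j = \alpha^{2j}\{1,-1\} = \{\pm\alpha^{2j}\}$ for $0\le j\le m-1$. The key structural observation is that $C_{j+1\bmod m} = \alpha^2 C_j$ holds for \emph{every} $j$, including the wraparound index $j=m-1$, since $\alpha^2 C_{m-1} = \alpha^{2m}\{1,-1\} = -\{1,-1\} = C_0$. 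This lets me treat all $m$ difference sets by a single uniform formula.

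Next I would compute each difference set. Writing $C_j=\{u,-u\}$ with $u=\alpha^{2j}$ and $C_{j+1\bmod m}=\{\alpha^2 u,-\alpha^2 u\}$, a direct expansion gives
\[
\mathcal{D}(C_{j+1\bmod m}, C_j) = \alpha^{2j}\bigl\{\pm(\alpha^2-1),\ \pm(\alpha^2+1)\bigr\},
\]
so the full difference multiset is the union over $0\le j\le m-1$ of $\alpha^{2j}\{\pm d_1,\pm d_2\}$, where $d_1=\alpha^2-1$ and $d_2=\alpha^2+1$ (both nonzero, since $m\ge2$ forces $\alpha^2\ne\pm1$). The crucial reindexing step again uses $-1=\alpha^{2m}$: the four base differences in block $j$ are $\alpha^{2j}d_1,\ \alpha^{2j+2m}d_1,\ \alpha^{2j}d_2,\ \alpha^{2j+2m}d_2$. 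Letting $\beta=\alpha^2$, a generator of the subgroup $Q$ of squares (order $2m$), the exponents $j$ and $j+m$ for $0\le j\le m-1$ run over all residues mod $2m$. Hence the $d_1$-differences collected over all blocks form exactly $d_1 Q$, and the $d_2$-differences form $d_2 Q$, so the total difference multiset equals $d_1 Q \,\cup\, d_2 Q$, each being a size-$2m$ coset of $Q$ with every element of multiplicity one.

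Finally I would invoke the coset dichotomy. There are precisely two cosets of $Q$ in $\eff_q^*$, namely $Q$ and its complement. Since cosets are equal or disjoint and $|d_1 Q|+|d_2 Q|=4m=|G\setminus\{0\}|$, if $d_1 Q$ and $d_2 Q$ are distinct their union is all of $G\setminus\{0\}$ with uniform multiplicity $\lambda=1$, yielding a $(q,m,2,1)$-$1$-CEDF; if they coincide, one coset gets multiplicity $2$ and the other $0$, which is not $\lambda(G\setminus\{0\})$ for any $\lambda$, so no CEDF results. The two cosets differ iff $d_1/d_2$ is a non-residue, equivalently (using $d_2^2\in Q$) iff $d_1 d_2=(\alpha^2-1)(\alpha^2+1)=\alpha^4-1$ is a quadratic non-residue, which is exactly the claimed condition.

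The main obstacle I expect is bookkeeping the two different moduli correctly: the block index runs mod $m$ while the exponent of $\alpha$ runs mod $q-1=4m$. In particular I must verify that the wraparound difference set $\mathcal{D}(C_0,C_{m-1})$ fits the same uniform formula via $-1=\alpha^{2m}$. Once that identity is applied consistently, the recombination of the $4m$ individual differences into the two full cosets $d_1 Q$ and $d_2 Q$ is the heart of the argument, and the residue/non-residue criterion falls out immediately.
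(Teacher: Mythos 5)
Your proposal is correct and follows essentially the same route as the paper: compute $C_j=\{\pm\alpha^{2j}\}$, observe that the union of the difference multisets is $\{\alpha^2-1,\alpha^2+1\}$ multiplied by the set of quadratic residues, and conclude that this covers $\eff_q^*$ exactly once if and only if exactly one of $\alpha^2\pm1$ is a residue, i.e., $\alpha^4-1$ is a non-residue. Your extra care with the wraparound index $j=m-1$ and the explicit coset-multiplicity bookkeeping simply fills in details the paper leaves implicit.
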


\begin{proof}
Here we have $C_0 = \{1,-1\}$ and $C_j= \{ \alpha^{2j}, -\alpha^{2j}\}$, for $1 \leq j \leq m-1$.
Therefore $\D (C_1,C_0) = \{\pm (\alpha^2-1), \pm (\alpha^2+1)\}$. More generally,
\[\D (C_{j+1 \bmod m},C_j) = \{\pm \alpha^{2j}(\alpha^2-1), \pm \alpha^{2j}(\alpha^2+1)\}\] for $0 \leq j \leq m-1$.
Let $C = \{ \pm \alpha^{2j}: 0 \leq j \leq m-1\}$. Note that $C$ consists of the quadratic residues in $\eff_q$.
We have
\[ \bigcup _{j=0 }^{m-1}\D(C_{j+1 \bmod m},C_j)  = \{ \alpha^2-1, \alpha^2+1 \} \otimes C,\]
where, for two sets $X, Y \subseteq \zed_p$, we define
$X \otimes Y = \{xy: x \in X, y \in Y\}$.  Clearly, $\{ \alpha^2-1, \alpha^2+1 \} \otimes C = {\eff_q}^*$ if and only if one of 
$\alpha^2-1, \alpha^2+1$ is a quadratic residue and the other is a quadratic nonresidue. This is equivalent to the condition
that $\alpha^4-1$ is a quadratic non-residue in ${\eff_q}^*$.
\end{proof}

\begin{remark}
Steven Wang observed that results of Cohen, Sharma and Sharma \cite{cohen} can be used to prove, for all sufficiently large prime powers $q$, that a primitive element $\alpha \in \eff_q$ exists such that $\alpha^4-1$ is a quadratic nonresidue. Indeed, \cite{cohen} shows that there is a constant $c$ such that, for all prime powers $q > c$, there is a primitive element $\alpha \in \eff_q$ such that $\alpha^4-1$ is also primitive. Of course, for $q \equiv 1 \bmod 4$, a primitive element in $\eff_q$ must be a quadratic nonresidue. The paper \cite{cohen} also shows that $c \approx 7.867 \times 10^8$.
\end{remark}

For arbitrary $\ell \geq 2$, we have the following more general result. 

\begin{theorem}
\label{thm2}
Suppose that $q = m\ell^2 +1$ is a prime power and $\alpha$ is a primitive element of ${\eff_q}$. Define $\beta = \alpha^{\ell}$
and let $H$ be the subgroup of ${\eff_q}^*$ of order $\ell m$ generated by $\beta$.
Then Construction \ref{const1} yields a $(q,m,\ell,1)$-$1$-CEDF in $\eff_q$ if and only if 
\[\{ \beta - 1 , \beta^{m+1}-1, \dots , \beta^{(\ell-1)m+1} -1 \}\] is
a set of coset representatives of $H$ in ${\eff_q}^*$.
\end{theorem}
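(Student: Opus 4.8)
The plan is to use the multiplicative structure of the cosets produced by \Cref{const1} to collapse the entire CEDF difference multiset into a single product set, from which the coset-representative condition can be read off directly. Throughout, all the relevant difference sets are treated as multisets.

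First I would fix the group-theoretic picture. With $\beta = \alpha^{\ell}$, the element $\beta$ has order $m\ell$ in ${\eff_q}^*$ (because $q-1 = m\ell^2$), so $H = \langle \beta \rangle$ has order $m\ell$ and $C_0 = \langle \beta^{m} \rangle$ is its unique subgroup of order $\ell$. Since $C_j = \beta^{j} C_0$, the sets $C_0, \dots, C_{m-1}$ are exactly the $m$ distinct cosets of $C_0$ in $H$; in particular they partition $H$, and $\beta^{m} C_0 = C_0$. I would then reduce every term of the defining CEDF union to a multiplicative translate of one fixed difference multiset $D = \beta C_0 - C_0$. For $0 \le j \le m-2$ one has $\D(C_{j+1}, C_j) = \beta^{j} D$, and the wrap-around term fits the same pattern because $\D(C_0, C_{m-1}) = \D(\beta^{m}C_0, \beta^{m-1}C_0) = \beta^{m-1} D$. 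Hence the full difference multiset is $\bigcup_{j=0}^{m-1} \beta^{j} D$.

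The key computational step is to factor $D$. For $a,b \in C_0$ I would write $\beta a - b = b\bigl(\beta(ab^{-1}) - 1\bigr)$ and use that $ab^{-1}$ ranges over the group $C_0$ as $a$ does, giving $D = C_0 \otimes S$, where $\otimes$ is the product-set notation from the proof of \Cref{thm1} and $S = \beta C_0 - 1 = \{\beta - 1, \beta^{m+1}-1, \dots, \beta^{(\ell-1)m+1}-1\}$ is precisely the set in the statement. Its $\ell$ entries are distinct because the exponents $1, m+1, \dots, (\ell-1)m+1$ are distinct modulo $m\ell$. Pulling the scalars $\beta^{j}$ inside then yields $\bigcup_{j=0}^{m-1}\beta^{j} D = \bigl(\bigcup_{j=0}^{m-1}\beta^{j}C_0\bigr) \otimes S = H \otimes S$, a multiset of size $m\ell \cdot \ell = q-1$, since the $\beta^{j}C_0$ partition $H$.

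It remains to characterize when $H \otimes S$ equals $G \setminus \{0\} = {\eff_q}^*$ with every element of multiplicity one, which is exactly the $\lambda = 1$ CEDF condition. The multiplicity of $x \in {\eff_q}^*$ in $H \otimes S$ equals the number of $s \in S$ lying in the coset $Hx$, so the required equality holds if and only if each of the $\ell$ cosets of $H$ in ${\eff_q}^*$ contains exactly one element of $S$; since $|S| = \ell$, this is precisely the assertion that $S$ is a complete set of coset representatives of $H$. (One also checks $0 \notin S$, as $\beta^{im+1} = 1$ would force $m \mid 1$.) I expect the main obstacle to be bookkeeping rather than depth: carefully maintaining multiset multiplicities at every stage, confirming that the wrap-around term genuinely reduces to $\beta^{m-1}D$, and isolating the clean counting principle that a product $H \otimes S$ covers the ambient group exactly once precisely when $S$ is a transversal of $H$.
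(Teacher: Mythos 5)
Your proposal is correct and follows essentially the same route as the paper's proof: both reduce the union of difference multisets to the product $H \otimes \{\beta-1, \beta^{m+1}-1, \dots, \beta^{(\ell-1)m+1}-1\}$ by factoring out powers of $\beta$ and absorbing $C_0$ into $H$, then read off the coset-representative condition. Your write-up merely makes explicit a few details the paper leaves to the reader (distinctness of the elements of $S$, the wrap-around term, and the final multiplicity-counting step).
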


\begin{proof}
We have $C_0 = \{  \alpha^{i\ell m} : 0 \leq i \leq \ell -1\}$. 
Since $\beta = \alpha^{\ell}$, we have 
\[C_j = \{  \alpha^{i\ell m+\ell j} : 0 \leq i \leq \ell - 1\} = \{  \beta^{im+j} : 0 \leq i \leq \ell - 1\} ,\]
for $0 \leq j \leq m-1$.
Hence,
\begin{eqnarray*}
\D (C_{j+1 \bmod m},C_j) &=& \{ \beta^{i_1m+j+1} - \beta^{i_2m+j} : 0 \leq i_1,i_2 \leq \ell - 1\}\\
&=& \{ \beta^j ( \beta^{i_1m+1} - \beta^{i_2m} ) : 0 \leq i_1,i_2 \leq \ell - 1\}
\end{eqnarray*}
 for $0 \leq j \leq m-1$.

Let \[B = \{ \beta^{j}: 0 \leq j \leq m-1\}\] and 
\[E = \{  \beta^{i_1m+1} - \beta^{i_2m}  : 0 \leq i_1,i_2 \leq \ell - 1\}.\]
We have
\[ \bigcup _{j=0 }^{m-1}\D(C_{j+1 \bmod m},C_j)  = E \otimes B.\]
We can further express
\[ E = \{ \beta - 1 , \beta^{m+1}-1, \dots , \beta^{(\ell -1)m+1} -1 \} \otimes \{1, \beta^m , \dots , \beta^{(\ell -1)m} \}.\]
It is easily seen that 
\[B \otimes \{1, \beta^m , \dots , \beta^{(\ell -1)m} \} = \{ \beta ^ i : 0 \leq i \leq \ell m-1 \} = H.\] 
Thus we have 
\[ E \otimes B = \{ \beta - 1 , \beta^{m+1}-1, \dots , \beta^{(\ell -1)m+1} -1\} \otimes H .\]
Hence, it follows that 
Construction \ref{const1} yields a $(q,m,\ell,1)$-CEDF in $\eff_q$ if and only if 
$\{ \beta - 1 , \beta^{m+1}-1, \dots , \beta^{(\ell -1)m+1} -1\}$ is
a set of coset representatives of $H$ in ${\eff_q}^*$.
\end{proof}

\begin{remark}
If we take $\ell=2$ in Theorem \ref{thm2}, then the resulting condition is equivalent to that of Theorem \ref{thm1}.
\end{remark}

Table \ref{tab1} lists parameters of CEDFs that we obtained from Construction \ref{const1} in the cases where $p= \ell^2m+1$ is prime.
We considered all parameter sets with $m \leq 50$ and  $\ell \leq 10$. Each quadruple in Table \ref{tab1} has the form
$(p,m,\ell,\alpha)$, where $\alpha$ is the primitive root used in Construction \ref{const1}.

\begin{table}
\caption{$(p,m,\ell;1)$-$1$-CEDFs obtained from Construction \ref{const1}}
\label{tab1}
\[
\begin{array}{|l|l|l|l|}
\hline
(13,3,2,2) & (17,4, 2, 3) & (151,6, 5, 6) & (29,7, 2, 2)\\ \hline
(73, 8, 3, 5) & (37, 9, 2, 2) & (41, 10, 2, 6) & (53, 13, 2, 8) \\ \hline
(127, 14, 3, 116) & (61, 15, 2, 35) & (241, 15, 4, 7) & (401, 16, 5, 27) \\ \hline
(73, 18, 2, 5) & (1217, 19, 8, 642) & (181, 20, 3, 57) & (337, 21, 4, 10) \\ \hline
(757, 21, 6, 2) & (89, 22, 2, 51) & (199, 22, 3, 44) & (97, 24, 2, 5)\\ \hline
(101, 25, 2, 2) & (401, 25, 4, 3) & (109, 27, 2, 6) & (433, 27, 4, 94) \\ \hline
(113, 28, 2, 3) & (271, 30, 3, 142) & (137, 34, 2, 3) & (307, 34, 3, 241) \\ \hline
(577, 36, 4, 230) & (149, 37, 2, 2) & ( 593, 37, 4, 339 ) & (157, 39, 2, 142) \\ \hline
(641, 40, 4, 264) & (379, 42, 3, 233) & (673, 42, 4, 5) & (173, 43, 2, 128) \\ \hline
(1549, 43, 6, 1165) & (397, 44, 3, 296) & ( 181, 45, 2, 28 ) & (193, 48, 2, 5) \\ \hline
(433, 48, 3, 393) & (769, 48, 4, 453) & (197, 49, 2, 32) & \\ \hline
\end{array}
\]
\end{table}



We record a couple of simple results concerning $(n, m, \ell; \lambda)$-$c$-CEDF with $c > 1$.

\begin{theorem} Suppose there is an $(n, m, \ell; \lambda)$-$1$-CEDF and suppose $\gcd(c,m) = 1$, where $1 < c < m-1$. 
Then there is an  $(n, m, \ell; \lambda)$-$c$-CEDF.
\end{theorem}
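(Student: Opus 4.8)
The plan is to obtain the desired $c$-CEDF from the given $1$-CEDF by a pure relabeling of its blocks, with no new combinatorial content introduced. Let $(A_0,\dots,A_{m-1})$ be the hypothesized $(n,m,\ell;\lambda)$-$1$-CEDF in $G$. Since $\gcd(c,m)=1$, the residue $c$ is invertible modulo $m$; write $d = c^{-1} \bmod m$. I would then define a permutation $\sigma$ of $\mathbb{Z}_m$ by $\sigma(j) = dj \bmod m$ and set $B_j = A_{\sigma(j)}$ for $0 \leq j \leq m-1$. Because $\sigma$ is a bijection of $\mathbb{Z}_m$, the blocks $B_0,\dots,B_{m-1}$ are again $m$ pairwise disjoint $\ell$-subsets of $G$, so the proposed family has the correct shape.

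The crux is a one-line index computation. With this choice of $\sigma$, one has $\sigma(j+c) \equiv d(j+c) \equiv dj + dc \equiv \sigma(j) + 1 \pmod m$, using $dc \equiv 1 \pmod m$. Consequently, for each $j$,
\[
\mathcal{D}(B_{j+c \bmod m}, B_j) = \mathcal{D}(A_{\sigma(j)+1 \bmod m}, A_{\sigma(j)}).
\]
As $j$ runs over $\{0,\dots,m-1\}$, the index $i = \sigma(j)$ runs over all of $\mathbb{Z}_m$ exactly once (this is exactly where $\gcd(c,m)=1$ is used), so forming the multiset union over $j$ is identical to forming it over $i$. Hence
\[
\bigcup_{j=0}^{m-1} \mathcal{D}(B_{j+c \bmod m}, B_j) = \bigcup_{i=0}^{m-1} \mathcal{D}(A_{i+1 \bmod m}, A_i) = \lambda (G \setminus \{0\}),
\]
where the final equality is precisely the defining property of the given $1$-CEDF. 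This shows $(B_0,\dots,B_{m-1})$ is an $(n,m,\ell;\lambda)$-$c$-CEDF, as required.

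I do not anticipate any serious obstacle here; the argument is essentially a change of variables. The only subtle point worth stating carefully is the choice of the \emph{inverse} $d = c^{-1}$ rather than $c$ itself as the relabeling multiplier, so that a step of size $c$ in the new indexing corresponds to a step of size $1$ in the old indexing. The second point to verify explicitly is that, because $\sigma$ is a genuine bijection of $\mathbb{Z}_m$, each difference multiset $\mathcal{D}(A_{i+1 \bmod m},A_i)$ is reproduced exactly once, so no difference is double-counted or omitted and the multiplicity $\lambda$ is preserved. (Note that the argument in fact works for any $c$ with $\gcd(c,m)=1$; the hypothesis $1 < c < m-1$ merely excludes the already-covered cases $c=1$ and $c=m-1$.)
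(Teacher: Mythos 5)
Your proposal is correct and is essentially identical to the paper's own proof: the paper likewise defines the relabelled family by $A'_i = A_{i c^{-1} \bmod m}$ and observes (without writing out the index computation) that it is the desired $c$-CEDF. Your write-up simply supplies the verification that $\sigma(j+c) = \sigma(j)+1$ and that the bijectivity of $\sigma$ preserves the multiset union, which is exactly the content the paper leaves implicit (and you silently correct what appears to be a typo in the paper, which reduces the subscript modulo $n$ rather than modulo $m$).
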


\begin{proof}
Suppose $\mathcal{A} = (A_0,\dots,A_{m-1})$ is an $(n, m, \ell; \lambda)$-$1$-CEDF. For $0 \leq j \leq m-1$, define
$A'_i = A_{i c^{-1} \bmod n}$. Then $\mathcal{A}' = (A'_0,\dots,A'_{m-1})$ is the desired $(n, m, \ell; \lambda)$-$c$-CEDF.
\end{proof}

\begin{theorem} Any $(n, m, \ell; \lambda)$-$1$-CEDF is  
an $(n, m, \ell; \lambda)$-$(m-1)$-CEDF.
\end{theorem}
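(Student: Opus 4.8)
The plan is to reduce the $(m-1)$-CEDF condition to the $1$-CEDF hypothesis via a simple reindexing combined with the antisymmetry of the difference operator. First I would observe that $j + (m-1) \equiv j - 1 \pmod{m}$, so that proving $\mathcal{A}$ is an $(n,m,\ell;\lambda)$-$(m-1)$-CEDF amounts to verifying the multiset identity
\[
\bigcup_{j=0}^{m-1} \mathcal{D}(A_{j-1 \bmod m}, A_j) = \lambda\,(G \setminus \{0\}).
\]

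The central tool is the elementary identity $\mathcal{D}(A, B) = -\mathcal{D}(B, A)$ (as multisets), which holds because $x - y = -(y - x)$ and negation is a bijection on $G$; here $-S$ denotes $\{-s : s \in S\}$ taken with multiplicities. Applying this with $A = A_{j-1}$ and $B = A_j$ gives $\mathcal{D}(A_{j-1}, A_j) = -\mathcal{D}(A_j, A_{j-1})$. I would then substitute $k = j-1 \bmod m$; as $j$ runs over $\{0,\dots,m-1\}$ so does $k$, and the left-hand union above becomes $-\bigcup_{k=0}^{m-1} \mathcal{D}(A_{k+1 \bmod m}, A_k)$, which by the $1$-CEDF hypothesis equals $-\lambda\,(G \setminus \{0\})$.

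The only remaining point is to check that $-\lambda\,(G \setminus \{0\}) = \lambda\,(G \setminus \{0\})$ as multisets. This follows because the map $x \mapsto -x$ is a bijection of $G$ that fixes $0$ and hence permutes $G \setminus \{0\}$; applying it to the multiset in which every nonzero element occurs with multiplicity $\lambda$ merely permutes the underlying elements and leaves all multiplicities unchanged. Combining the three displays then yields the desired identity.

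I do not anticipate any genuine obstacle here: the result is essentially a restatement of the fact that the CEDF defining equation is invariant under simultaneously negating all of the difference multisets, and the only thing requiring care is the bookkeeping of ensuring each manipulation is performed at the level of multisets rather than ordinary sets (in particular, that negation preserves multiplicities). One could alternatively phrase the whole argument as the single observation that the $(m-1)$-CEDF defining equation is the termwise negation of the $1$-CEDF defining equation, and both describe the same multiset $\lambda\,(G \setminus \{0\})$.
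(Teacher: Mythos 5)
Your proof is correct: the identity $\mathcal{D}(A,B) = -\mathcal{D}(B,A)$, the reindexing $k = j-1 \bmod m$, and the observation that negation permutes $G \setminus \{0\}$ (preserving multiplicities) together give exactly the $(m-1)$-CEDF condition from the $1$-CEDF hypothesis. The paper dismisses this as ``Immediate,'' and your argument is precisely the routine verification being waved away, so there is nothing further to compare.
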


\begin{proof}
Immediate.
\end{proof}

\subsection{More general relations}
\label{general.sec}

Our main construction, Construction \ref{const1}, applies to the relation $\sim_1$, where $b \sim_1 a$ if $b = a+1$. The relation $\sim_1$ can be thought of as a directed cycle of length $m$. This is an example of a (directed) Cayley graph in $\zed_m$. 
More generally, we can consider a subset $S \subseteq \zed_m \setminus \{0\}$ and define a  relation $\sim_S$  as follows:
$b \sim_S a$ if $b-a \in S$. This relation corresponds to the directed Cayley graph that is often denoted as $\Gamma(\zed_m,S)$.

We can define $S$-external difference families in the obvious way for an arbitrary abelian group $G$.

\begin{definition}[$S$-external difference family, or $S$-EDF]
Let $G$ be an additive abelian group of order $n$. Suppose $m \geq 2$ and suppose $S \subseteq G\setminus \{0\}$. 
An $(n, m, \ell; \lambda)$-$S$-external difference family (or $(n, m, \ell; \lambda)$-$S$-EDF) is a set of $m$ disjoint $\ell$-subsets of $G$, say $\mathcal{A} = (A_0,\dots,A_{m-1})$, such that the following multiset equation holds:
\[
\bigcup _{c \in S} \bigcup_{j=0}^{m-1} \mathcal{D}(A_{j+c\bmod m}, A_j) = \lambda (G \setminus \{0\}).
\]
We observe that $|S| m \ell^2 = \lambda (n-1)$ if an $(n, m, \ell; \lambda)$-$S$-EDF exists.
\end{definition}

To illustrate, we present some examples of $S$-EDF for $S = \{1,2\}$. These are constructed using the basic idea of Construction \ref{const1}. Note that these $S$-EDF have $\lambda = |S| = 2$ and $n = m \ell^2 + 1$.

\begin{example} 
\label{ex2972-2}
{\rm 
The following seven sets of size $2$ form a $(29,7,2,2)$-$\{1,2\}$-CEDF in $\zed_{29}$:
\[\A =(
\{1, 28\},
\{9,20\},
\{23,6\},
\{4,25\},
\{7,22\},
\{5,24\},
\{13,16\}
).\]
}
\end{example}

\begin{example} 
\label{ex3743}
{\rm 
The following four sets of size $3$ form a $(37,4,3,2)$-$\{1,2\}$-CEDF in $\zed_{37}$:
\[\A =(
\{1, 26, 10\},
\{8, 23, 6\},
\{27, 36, 11\},
\{31, 29, 14\}
).\]
}
\end{example}

\begin{example} 
\label{ex11374}
{\rm 
The following seven sets of size $4$ form a $(113,7,4,2)$-$\{1,2\}$-CEDF in $\zed_{113}$:
\begin{eqnarray*}
\A &=& (
\{1, 98, 112, 15\},
\{81, 28, 32, 85\},
\{7, 8, 106, 105,\},
\{2, 83, 111, 30\}\\ &&
\{49, 56, 64, 57\},
\{14, 16, 99, 97\},
\{4, 53, 109, 60\}
).
\end{eqnarray*}
}
\end{example}


\subsection{Strong circular AMD codes}
\label{strong.sec}

We can also distinguish between strong and weak circular AMD codes. Thus far, we have presented weak versions of circular AMD codes and CEDFs. These are secure in settings where the source (i.e., the secret) is assumed to be uniformly distributed. Related definitions of strong circular AMD codes and strong CEDFs would be relevant in settings where the secret is not uniformly distributed. These definitions are presented next.

\begin{definition}[Strong circular AMD code]
Let $\mathcal{G}$ be an additive abelian group of order $n$ and $\mathcal{A} = \{ {A}_0,\dots, {A}_{m-1}\}$ be $m$ pairwise disjoint $\ell$-subsets of $\mathcal{G}$. Let $0 < c \leq m-1$ and let $0 < \epsilon < 1$. Then $(\mathcal{G},\mathcal{A})$ is an  \emph{ $\epsilon$-secure strong $c$-circular $(n,m,\ell)$-AMD code} if an adversary cannot win the following \emph{strong $c$-circular AMD game} with probability greater than $\epsilon$.
\begin{enumerate}
\item The \emph{source} $i \in \{0,\dots,m-1\}$ is specified and given to the adversary.
\item The adversary chooses a value $\Delta \in \mathcal{G} \setminus \{0\}$. 
\item The source is encoded by choosing $g$ uniformly at random from ${A}_i$.
\item The adversary wins if and only if $g + \Delta \in A_j$ for $j = i + c \bmod m$. 
\end{enumerate}
\end{definition}

\begin{definition}[Strong circular external difference family (SCEDF)]
Let $G$ be an additive abelian group of order $n$. An $(n, m, \ell; \lambda)$-strong $c$-circular external difference family (or $(n, m, \ell; \lambda)$-$c$-SCEDF) is a set of $m$ disjoint $\ell$-subsets of $G$, say $\mathcal{A} = (A_0,\dots,A_{m-1})$, such that the following multiset equation holds for every $j$, $0 \leq j \leq m-1$:
\[
\mathcal{D}(A_{j+c\bmod m}, A_j) = \lambda (G \setminus \{0\}).
\]
\end{definition}

In general, SCEDF seem to be difficult to construct. Of course there are trivial examples, since any $(n, m, \ell; \lambda)$-$1$-CEDF with $\ell=2$ is automatically strong. However, at present, we are unable to construct any $(n, m, \ell; \lambda)$-$c$-SCEDF with $\ell \geq 3$. This is an interesting open problem.

\section{Conclusion}
\label{summary.sec}


We have proposed a new definition for non-malleable secret sharing schemes and discussed how certain types of these schemes can be obtained from circular external difference families. There are many problems that can be considered in future work. Here are a few such problems.

\begin{itemize}
\item Determine whether any nontrivial examples of strong circular difference families exist.
\item Construct infinite classes of (weak) circular difference families, especially when $\lambda = 1$.
\item Circular difference families yield optimal circular weak AMD codes. However, there are many situations where desired optimal circular difference families do not exist. In these cases, it would be useful to have ``good'' non-optimal constructions for AMD codes.
\item Find additional constructions of $S$-external difference families, for various choices of subsets $S \subseteq \zed_m$. 
\item Determine if Theorem \ref{thm1} can be used to construct a $(q,m,2,1)$-$1$-CEDF in $\eff_q$ for all prime powers $q = 4m+1$.
\end{itemize}

\section*{Acknowledgements}
We thank Steven Wang for bringing the results of \cite{cohen} to our attention.


\end{document}